\numberwithin{equation}{section}
\def\R{\mathbb{R}}
\def\Z{\mathbb{Z}}
\def\E{\mathbb{E}}
\newcommand{\elz}{\ell^2 (\Z^d)}
\newcommand{\elr}{L^2 (\R^d)}
\newcommand{\supp}{\mathrm{supp\,}}
\newcommand{\blm}{\mathrm{d_{BL}}(\mu, \tilde\mu)}
\newcommand{\blr}{\mathrm{d_{BL}}(\rho, \tilde\rho)}
\newcommand{\krm}{\mathrm{d_{KR}}(\mu, \tilde\mu)}
\newcommand{\krr}{\mathrm{d_{KR}}(\rho, \tilde\rho)}
\newcommand{\tds}{\tilde{\rho}}
\newcommand{\tids}{\tilde{N}}
\newcommand{\hrestr}{H_\Lambda}
\def\tr{{\text{tr}}}
\theoremstyle{plain}
\newtheorem*{theorem*}{Theorem}
\newtheorem{nono-theorem}{Theorem}[]
\newtheorem{theorem}{Theorem}
\theoremstyle{plain}
\newtheorem{proposition}{Proposition}[section]
\newtheorem{rmk}[proposition]{Remark} 
\newtheorem{claim}[proposition]{Claim}
\begin{document}
\title[Continuity of the integrated density of states]{On the continuity of the integrated density of states in the disorder}

\author {Mira Shamis}
\address{School of Mathematical Sciences,
Queen Mary University of London, 
Mile End Road, London E1 4NS, England} \email{m.shamis@qmul.ac.uk}

\maketitle

\begin{abstract}
Recently, Hislop and Marx studied the dependence of the integrated density of states on the underlying probability distribution for a class of discrete random Schr\"odinger operators, and established a quantitative form of continuity in weak* topology. We develop an alternative approach to the problem, based on Ky Fan inequalities, and establish a sharp version of the estimate of Hislop and Marx. We also consider a corresponding problem for continual random Schr\"odinger operators on $\R^d$.
\end{abstract}

\section{Introduction}

Recently, Hislop and Marx \cite{hm} studied the dependence of the integrated density of states (IDS) of random Schr\"odinger operators on the distribution of the potential.

\smallskip

\noindent Let $\{ V(n)\}_{n\in\Z^d}$ be independent identically distributed random variables (i.i.d.r.v.) with the common probability distribution $\mu$. Let $H$ be the random Schr\"odinger operator acting on $\elz$ by

\begin{equation}\label{def-opH}
H = -\Delta + V,\quad (H\psi)(n) = \sum_{m\,\, \text{is adjacent to\,\,} n} (\psi(n) - \psi(m)) + V(n)\psi(n).~
\end{equation}
The IDS corresponding to the operator $H$ is the function

\begin{equation}\label{ids-disc}
N(E) = \lim_{\Lambda\nearrow\Z^d}\frac{1}{|\Lambda|}\,\#\,\{\text{eigenvalues of\,} \hrestr \, \text{in}\,  (-\infty, E]\},~
\end{equation}
where $\hrestr$ is the restriction of $H$ to a finite box $\Lambda\subset\Z^d$,\ i.e.\ $\hrestr = P_\Lambda H P_\Lambda^*$, where $P_\Lambda: \elz \rightarrow \ell^2(\Lambda)$ is the coordinate projection ((\ref{ids-disc}) holds with probability $1$). The measure with cumulative distribution function $N$ is denoted by $\rho$.

\smallskip

\noindent To discuss the dependence of $\rho$ on the distribution of the potential $\mu$, we introduce two metrics on the space of Borel probability measures on $\R$.

\smallskip

\noindent The Kantorovich-Rubinstein (Wasserstein) metric  is defined via

\begin{equation}\label{def-kr}
\krm  = \sup\left\{ \left|\int f\, d\mu - \int f\, d\tilde\mu\right|\, :\,\, f: \R\rightarrow\, \R\, \text{is 1-Lipschitz}\,\right\}.~
\end{equation}

By the Kantorovich-Rubinstein duality theorem
\begin{equation}\label{def-kr-exp}
\krm =  \inf\{ \E |X - \tilde X|\},~
\end{equation}
where the infimum is taken over $\R^2$-valued random variables $(X, \tilde X)$, such that $X\sim\mu, \tilde X\sim \tilde\mu$.
Following \cite{hm}, we also consider the bounded Lipschitz metric, defined by
\begin{equation}\label{def-bl}
\blm = \sup\left\{ \left|\int f\, d\mu - \int f\, d\tilde\mu\right|\, :\, f: \R\rightarrow\, \R\, \text{is 1-Lip},\, \|f\|_\infty \leq 1\right\}.~
\end{equation}
Our definition differs from \cite{hm} by a multiplicative constant. Observe that 
\begin{equation*}
\blm \leq \krm,~
\end{equation*}
and if $\supp\mu, \supp\tilde\mu \subset [-A, A]$, then
\begin{equation*}
 \krm \leq \max (A, 1) \blm.~
\end{equation*}
In this notation Theorem 1.1 of \cite{hm} (formulated here in slightly less general setting than in the cited work) asserts the following.

\begin{theorem*}[Hislop--Marx] Suppose $H, \tilde H$ are random Schr\"odinger operators of the form (\ref{def-opH}) with potentials $\{ V(n)\}, \{ \tilde V(n)\}$ sampled from a probability distributions $\mu, \tilde\mu$, respectively. Denote by $N, \tids$ the IDS corresponding to $H, \tilde H$, and let $\rho, \tds$ be the measures with cumulative distribution functions $N, \tids$, respectively. If  $\supp\mu, \supp\tilde\mu \subset [-A, A]$, then
\begin{equation}\label{hm-dos}
\blr \leq C_A \blm^{1/(1 + 2d)},~
\end{equation}
\begin{equation}\label{hm-ids}
\sup|N(E) - \tids(E)|\leq\frac{C_A}{\log_+{\frac{1}{\blm}}},~
\end{equation}
where $C_A$ depends only on $A$.
\end{theorem*}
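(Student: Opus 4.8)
The plan is to combine the Kantorovich--Rubinstein duality (\ref{def-kr-exp}) with a Ky Fan (Lidskii--Mirsky) eigenvalue inequality at finite volume, which yields a \emph{linear} comparison $\krr\le\krm$, from which both (\ref{hm-dos}) and (\ref{hm-ids}) follow as corollaries. First I would fix a coupling $(V(n),\tilde V(n))_{n\in\Z^d}$ of i.i.d.\ pairs, each distributed according to an optimal coupling of $\mu,\tilde\mu$ attaining the infimum in (\ref{def-kr-exp}) (which exists since the supports are bounded), so that the marginals are correct and $\E|V(0)-\tilde V(0)|=\krm$. On a common box $\Lambda$ the difference $\hrestr-\tilde H_\Lambda=P_\Lambda(V-\tilde V)P_\Lambda^*$ is the diagonal matrix with entries $V(n)-\tilde V(n)$, so $\|\hrestr-\tilde H_\Lambda\|_1=\sum_{n\in\Lambda}|V(n)-\tilde V(n)|$. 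Writing $\rho_\Lambda,\tds_\Lambda$ for the normalized eigenvalue counting measures of $\hrestr,\tilde H_\Lambda$, one-dimensional optimal transport between equal-weight atomic measures is realized by matching eigenvalues in increasing order, so the Ky Fan inequality $\sum_j|\lambda_j^\downarrow(\hrestr)-\lambda_j^\downarrow(\tilde H_\Lambda)|\le\|\hrestr-\tilde H_\Lambda\|_1$ gives
\begin{equation*}
\mathrm{d_{KR}}(\rho_\Lambda,\tds_\Lambda)\le\frac{1}{|\Lambda|}\sum_{n\in\Lambda}|V(n)-\tilde V(n)|.
\end{equation*}
Taking expectations and letting $\Lambda\nearrow\Z^d$ --- using that $\rho_\Lambda\to\rho$, $\tds_\Lambda\to\tds$ weakly with uniformly bounded supports, so that $\mathrm{d_{KR}}$ converges, and applying dominated convergence --- yields $\krr\le\krm$.

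The estimate (\ref{hm-dos}) then follows at once from the elementary comparisons recorded in the excerpt: since $\blr\le\krr$ always, and $\krm\le\max(A,1)\,\blm$ under the support hypothesis,
\begin{equation*}
\blr\le\krr\le\krm\le\max(A,1)\,\blm.
\end{equation*}
This linear, dimension-free bound is in fact stronger than (\ref{hm-dos}); to recover the stated form one merely weakens it using $\blm\le2$, namely $\blm\le2^{2d/(1+2d)}\blm^{1/(1+2d)}$, so that $\blr\le C_A\,\blm^{1/(1+2d)}$, the dimension entering only through the artificially inserted exponent.

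For (\ref{hm-ids}) I would exploit the identity $\krr=\int_\R|N(E)-\tids(E)|\,dE$, valid for probability measures on $\R$ (the $1$-Wasserstein distance equals the $L^1$ distance of distribution functions). Combined with $\krr\le\krm$ this gives $\int_\R|N-\tids|\,dE\le\krm$. To upgrade this $L^1$ bound to a uniform one, suppose $N(E_0)-\tids(E_0)=s>0$; by monotonicity of $N$ and a common modulus of continuity $\omega$ of the two densities of states one has $N(E)-\tids(E)\ge s-\omega(h)$ on $[E_0,E_0+h]$, whence $\krm\ge h(s-\omega(h))$, i.e.\ $s\le\omega(h)+\krm/h$. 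The IDS of a bounded operator of the form (\ref{def-opH}) is log-Hölder continuous, $\omega(h)\le C_A/\log_+(1/h)$, uniformly in the disorder; choosing $h=\sqrt{\krm}$ makes the transport term $\sqrt{\krm}$ negligible against the logarithmic one and gives $\sup_E|N-\tids|\le C_A'/\log_+(1/\krm)$. Finally $\krm\le\max(A,1)\,\blm$ converts the denominator into $\log_+(1/\blm)$ up to an additive constant, which is (\ref{hm-ids}).

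The main obstacle is not the algebra but the two limiting inputs. First, one must justify passing the finite-volume Ky Fan bound to the thermodynamic limit: the almost-sure weak convergence $\rho_\Lambda\to\rho$ is standard, but it is the uniform support bound $\supp\rho_\Lambda\subset[-(4d+A),\,4d+A]$ that promotes weak convergence to convergence in $\mathrm{d_{KR}}$ and legitimizes dominated convergence after taking expectations. Second, the passage from the $L^1$ bound to (\ref{hm-ids}) hinges on an a priori log-Hölder modulus of continuity for the IDS with a constant depending only on $A$ and $d$, a Craig--Simon-type input; securing this regularity \emph{uniformly in $\mu$} --- rather than any manipulation --- is the delicate point, and it is precisely what forces the logarithmic rate in (\ref{hm-ids}).
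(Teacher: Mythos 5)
Your proposal is correct and follows essentially the same route as the paper: the quoted Hislop--Marx statement is not proved directly in the paper but is superseded by its Theorem~\ref{th:disc}, whose proof (Section 3.1) is exactly your argument --- an optimal coupling from the Kantorovich--Rubinstein duality (\ref{def-kr-exp}) plus the Ky Fan inequality (\ref{eq:mod}) at finite volume to get $\krr\leq\krm$, followed by the Craig--Simon log-H\"older modulus to obtain the uniform bound on $N-\tids$. Your phrasing via monotone matching of eigenvalues and the identity $\krr=\int|N-\tids|\,dE$ is an equivalent repackaging of the paper's test-function computations (\ref{eq:est-trf}) and (\ref{eq:f-ids})--(\ref{eq:lower-bound}), and your observation that the linear bound then yields (\ref{hm-dos})--(\ref{hm-ids}) after converting between $\mathrm{d_{KR}}$ and $\mathrm{d_{BL}}$ is precisely the sense in which the paper sharpens the original polynomial-approximation proof of Hislop and Marx.
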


We refer to \cite{hm} for a discussion of earlier work on the subject, and only mention the result \cite{hks} on the continuity of the integrated density of states as a function of the coupling constant.

\medskip

Hislop and Marx \cite{hm} presented several applications, particularly, to the continuity of the Lyapunov exponent of a one-dimensional operator as a function of the underlying distribution of the potential. The proof of the theorem in \cite{hm} is based on the approximation of the function $f$ (in (\ref{def-bl})) by polynomials. 

\smallskip

We suggest a different approach to estimates of the form (\ref{hm-dos}) using the Ky Fan inequalities. Our first result is the following theorem.

\begin{theorem}\label{th:disc} Suppose $H, \tilde H$ are random Schr\"odinger operators of the form (\ref{def-opH}), where $\{ V(n)\}, \{ \tilde V(n)\}$ are i.i.d.r.v. distributed accordingly to $\mu, \tilde\mu$ respectively. Let $N, \tids$ be the IDS corresponding to $H, \tilde H$, and let $\rho, \tds$ be the measures with cumulative distribution functions $N, \tids$ respectively. Then,
\begin{equation}\label{dos}
\krr \leq \krm,~
\end{equation}
\begin{equation}\label{ids}
\sup_E|N(E) - \tids(E)|\leq\frac{C}{\log_+{\frac{1}{\krm}}},~
\end{equation}
where $C > 0$ is a numerical constant.
\end{theorem}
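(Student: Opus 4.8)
The plan is to build a single coupling of the two operators and to compare their spectra box by box via the Ky Fan (Lidskii--Wielandt) majorization inequality. First I would fix a finite box $\Lambda \subset \Z^d$ and choose i.i.d.\ $\R^2$-valued pairs $(V(n), \tilde V(n))_{n \in \Lambda}$, each distributed according to a near-optimal coupling of $\mu$ and $\tilde\mu$ in the sense of (\ref{def-kr-exp}), realized on a common probability space. Then $\hrestr$ and $\tilde H_\Lambda$ act on the same $|\Lambda|$-dimensional space, and their difference $D = \hrestr - \tilde H_\Lambda$ is the diagonal matrix with entries $V(n) - \tilde V(n)$. Ordering the eigenvalues of $\hrestr$ and $\tilde H_\Lambda$ increasingly, the Lidskii--Wielandt inequality states that the vector of eigenvalue differences is majorized by the singular values of $D$; applying the convex function $t \mapsto |t|$ gives $\sum_i |\lambda_i - \tilde\lambda_i| \le \tr|D| = \sum_{n \in \Lambda} |V(n) - \tilde V(n)|$. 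Since in one dimension the monotone matching of order statistics realizes the optimal transport, the left-hand side is exactly $|\Lambda|$ times the Kantorovich--Rubinstein distance between the empirical spectral measures $\rho_\Lambda$ and $\tilde\rho_\Lambda$ of $\hrestr$ and $\tilde H_\Lambda$, so that $\mathrm{d_{KR}}(\rho_\Lambda, \tilde\rho_\Lambda) \le |\Lambda|^{-1}\sum_{n\in\Lambda}|V(n) - \tilde V(n)|$.

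To obtain (\ref{dos}) I would take expectations and pass to the limit $\Lambda \nearrow \Z^d$. Taking the expectation over the coupling, the right-hand side equals $\E|V(0) - \tilde V(0)|$, which is within $\varepsilon$ of $\krm$ by the choice of coupling and (\ref{def-kr-exp}). On the other side, almost surely $\rho_\Lambda \to \rho$ and $\tilde\rho_\Lambda \to \tds$ weakly, and $\mathrm{d_{KR}}$ is lower semicontinuous under weak convergence; since $\rho$ and $\tds$ are deterministic, Fatou's lemma yields $\krr \le \liminf_\Lambda \E[\mathrm{d_{KR}}(\rho_\Lambda, \tilde\rho_\Lambda)] \le \krm + \varepsilon$, and letting $\varepsilon \to 0$ gives (\ref{dos}).

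For (\ref{ids}) I would convert this transport estimate into a uniform bound on the cumulative distribution functions using an a priori modulus of continuity of the IDS. Recall the one-dimensional identity $\krr = \int_\R |N(E) - \tids(E)|\, dE$. Writing $s = \sup_E|N(E) - \tids(E)|$ and assuming the supremum is essentially attained near some $E_0$, say $N(E_0) - \tids(E_0) \ge s - \varepsilon$, I would use that $N$ is nondecreasing while $\tids$ has modulus of continuity $\omega$ to get $N(E) - \tids(E) \ge s - \varepsilon - \omega(t)$ for every $E \in [E_0, E_0 + t]$. Integrating over this interval gives $\krr \ge t(s - \varepsilon - \omega(t))$, hence $s \le \omega(t) + \krr/t$ for every $t > 0$. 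The crucial input is the log-Hölder continuity of the IDS (Craig--Simon type), $\omega(t) \le C/\log_+(1/t)$ with $C$ independent of the potential distribution; optimizing over $t$ (e.g.\ $t = \sqrt{\krr}$) and invoking (\ref{dos}) together with the monotonicity of $x \mapsto 1/\log_+(1/x)$ then yields (\ref{ids}).

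The main obstacle is precisely this a priori regularity input: the passage from the Wasserstein to the Kolmogorov (uniform) metric is impossible without some modulus of continuity of one measure, and since $\mu$ is arbitrary the only universally available one is the log-Hölder bound, whose constant must be checked to depend only on the hopping term (and the dimension $d$), not on $\mu$ or $\tilde\mu$. A secondary technical point is the interchange of the thermodynamic limit with the expectation over the coupling, which is handled by weak lower semicontinuity of $\mathrm{d_{KR}}$ and Fatou's lemma as above.
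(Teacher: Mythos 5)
Your proposal is correct and follows essentially the same route as the paper: the Ky Fan/Lidskii inequality on a finite box combined with a Kantorovich--Rubinstein coupling of the potentials yields (\ref{dos}), and the Craig--Simon log-H\"older continuity converts the $\mathrm{d_{KR}}$ estimate into the uniform bound (\ref{ids}) via $\sup_E|N-\tids|\le\inf_\delta\bigl(\omega(\delta)+\krm/\delta\bigr)$. The only (immaterial) differences are that the paper passes to the infinite-volume limit for each fixed Lipschitz $f$ through $\int f\,d\rho=\lim_{\Lambda}|\Lambda|^{-1}\E\,\tr\,f(H_\Lambda)$ rather than through lower semicontinuity of $\mathrm{d_{KR}}$ and Fatou, and extracts the sup-norm bound with an explicit ramp test function rather than the identity $\krr=\int|N-\tids|\,dE$.
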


\begin{rmk} The power $1$ as well as the prefactor $1$ in (\ref{dos}) are optimal in general.
\end{rmk}

\begin{rmk} This result can be extended to other models in which the potential is of the form
\begin{equation}\label{finite-rank}
\sum v_j P_j,~
\end{equation}
where $v_j$ are i.i.d.r.v. with common Borel distribution supported on a finite interval and $P_j$ are finite rank projections (see \cite{hm}).
\end{rmk}
\begin{rmk} Theorem \ref{th:disc} can be extended to different underlying lattices, since the proof does not rely on the structure of $\Z^d$.
\end{rmk}

\medskip

\noindent In the follow up paper \cite{hm2}, Hislop and Marx prove a version of their results for the continual Anderson model, which is \textit{not} of the form (\ref{finite-rank}). A modification of our argument can be applied to the continual setting as well. We illustrate it by the following theorem.

\medskip

\noindent Let $H$ be a random Schr\"odinger operator acting on $L^2(\R^d)$, defined by

\begin{equation}\label{def-opHcont}
H = -\Delta + V,~
\end{equation}
where the potential $V$ is of the form
\begin{equation}\label{def-v}
V(x) = \sum_{j\in\Z^d} v_j u(x - j),\,\ x\in\R^d,
\end{equation}
where $v_j$ are i.i.d.r.v. distributed accordingly to $\mu$, and $u$ is real-valued continuous compactly supported function: $u\in C_c(\R)$. Denote by $\Lambda$ the cube of side length $L$ around the origin
\[
\Lambda = \left[-\frac{L}{2}, \frac{L}{2}\right]^d.~
\]
Let $\hrestr$ be the restriction of $H$ to  $L^2(\Lambda)$ with Dirichlet boundary conditions. Define the IDS corresponding to $H$ similarly to (\ref{ids-disc})
\begin{equation}\label{eq:ids-cont}
N(E) = \lim_{L\to\infty}\frac{1}{L^d}\,\#\,\{\text{eigenvalues of\,} \hrestr \, \text{in}\,  (-\infty, E]\},~
\end{equation}
and let $\rho$ be the measure with cumulative distribution function $N$.

\begin{theorem}\label{th:cont} Suppose $H, \tilde H$ are random Schr\"odinger operators of the form (\ref{def-opHcont}), and suppose that $\supp\,\mu, \supp\,\tilde\mu\subset\R_+$ and $u\geq 0$. Let $N, \tids$ be the IDS corresponding to $H, \tilde H$, and let $\rho, \tds$ be the measures with cumulative distribution functions $N, \tids$ respectively. If $\alpha > \frac{d}{2} - 1$, then
\begin{equation}\label{dos-cont}
\left|\int f\left(\frac{1}{(1 + E)^\alpha}\right) d\rho(E) - \int f\left(\frac{1}{(1 + E)^\alpha}\right) d\tds(E)\right|\leq C(d, u, \alpha) \krm,~
\end{equation}
for any $1$-Lipschitz function $f$ for which $\int f\left(\frac{1}{1 + E}\right) d\rho(E)$ converges.

If $d = 1,2,3$ and $\supp\, \mu\subset [0, A]$ then for any $E_0\in\R$
\begin{equation}\label{ids-cont}
\sup_{E\leq E_0}|N(E) - \tids(E)|\leq\frac{C(d, E_0, A)}{\log_+^{\kappa_d}{\frac{1}{\krm}}},\,\,~
\end{equation}
where $\kappa_1 = 1, \kappa_2 = 1/4, \kappa_3 = 1/8$.
\end{theorem}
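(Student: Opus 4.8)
The plan is to mimic the proof of \eqref{dos} after replacing the unbounded operators $H_\Lambda,\tilde H_\Lambda$ by the bounded, trace\nobreakdash-class\nobreakdash-comparable transforms $(1+H_\Lambda)^{-\alpha}$ and $(1+\tilde H_\Lambda)^{-\alpha}$. Since $v_j\ge 0$ and $u\ge 0$ we have $V\ge 0$, hence $H_\Lambda,\tilde H_\Lambda\ge 0$ and these powers are well defined and bounded. Write $\phi(E)=(1+E)^{-\alpha}$ and let $\mu_\Lambda,\tilde\mu_\Lambda$ be the eigenvalue\nobreakdash-counting measures of $H_\Lambda,\tilde H_\Lambda$ (normalized by $|\Lambda|$). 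Then $\phi_\ast\mu_\Lambda$ is the eigenvalue\nobreakdash-counting measure of $(1+H_\Lambda)^{-\alpha}$, so $\int f(\phi(E))\,d\mu_\Lambda=\int f\,d(\phi_\ast\mu_\Lambda)$, and for $1$\nobreakdash-Lipschitz $f$ the Ky Fan inequalities together with Kantorovich--Rubinstein duality give, exactly as for \eqref{dos},
\[
\Big|\int f\,d(\phi_\ast\mu_\Lambda)-\int f\,d(\phi_\ast\tilde\mu_\Lambda)\Big|\le \frac{1}{|\Lambda|}\,\big\|(1+H_\Lambda)^{-\alpha}-(1+\tilde H_\Lambda)^{-\alpha}\big\|_1 .
\]

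The core is a trace\nobreakdash-norm bound on the difference of the powers. Let $w(x)=\sum_j(v_j-\tilde v_j)u(x-j)$ and let $W$ be multiplication by $w$, so $W=H_\Lambda-\tilde H_\Lambda$. Using $A^{-\alpha}=\tfrac1{\Gamma(\alpha)}\int_0^\infty t^{\alpha-1}e^{-tA}\,dt$ with $A=1+H_\Lambda$, $B=1+\tilde H_\Lambda$, Duhamel's formula $e^{-tA}-e^{-tB}=-\int_0^t e^{-sA}W e^{-(t-s)B}\,ds$, the factorization $W=|w|^{1/2}\sigma|w|^{1/2}$ with $|\sigma|\le1$, and Schatten--H\"older, each Hilbert--Schmidt factor is controlled by a heat\nobreakdash-kernel diagonal: the Feynman--Kac representation with $V\ge0$ and Dirichlet boundary conditions gives $e^{-sH_\Lambda}(x,x)\le(4\pi s)^{-d/2}$, whence $\big\||w|^{1/2}e^{-sA}\big\|_2^2\le e^{-2s}(8\pi s)^{-d/2}\|w\|_{L^1(\Lambda)}$. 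This yields
\[
\big\|(1+H_\Lambda)^{-\alpha}-(1+\tilde H_\Lambda)^{-\alpha}\big\|_1\le \frac{C_d}{\Gamma(\alpha)}\,\|w\|_{L^1(\Lambda)}\int_0^\infty t^{\alpha-1}e^{-t}\!\!\int_0^t \big(s(t-s)\big)^{-d/4}\,ds\,dt ,
\]
and the double integral is a product of Beta and Gamma factors that converges precisely when $\alpha>\tfrac d2-1$ (the inner integral requires $d<4$; larger $d$ is handled by splitting $W$ across higher\nobreakdash-order Duhamel terms). Finally $\|w\|_{L^1(\Lambda)}\le\|u\|_{L^1}\sum_j|v_j-\tilde v_j|$ over the $\sim|\Lambda|$ relevant sites; choosing for each $j$ the optimal coupling with $\E|v_j-\tilde v_j|=\krm$ (by \eqref{def-kr-exp}), dividing by $|\Lambda|$, taking expectations and letting $L\to\infty$ (using $\phi_\ast\mu_\Lambda\to\phi_\ast\rho$ weakly and dominated convergence) gives \eqref{dos-cont}.

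For the uniform estimate \eqref{ids-cont} I would transport the Wasserstein bound of \eqref{dos-cont} to a sup\nobreakdash-norm bound on distribution functions. Taking the supremum over $1$\nobreakdash-Lipschitz $f$ in \eqref{dos-cont} gives $\mathrm{d_{KR}}(\phi_\ast\rho,\phi_\ast\tds)\le C(d,u,\alpha)\krm$. Combining the elementary inequality $\|F_\nu-F_{\tilde\nu}\|_\infty\le\eta^{-1}\mathrm{d_{KR}}(\nu,\tilde\nu)+\sup_x\big[F_{\tilde\nu}(x+\eta)-F_{\tilde\nu}(x)\big]$ with the fact that on $\{E\le E_0\}$ the map $\phi$ is bi\nobreakdash-Lipschitz onto a compact subinterval of $(0,1]$, I obtain, for every $\eta>0$,
\[
\sup_{E\le E_0}|N(E)-\tids(E)|\le \frac{C(d,E_0,\alpha)\,\krm}{\eta}+\sup_{0\le E\le E_0}\big[\tids(E)-\tids(E-\eta)\big].
\]

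The second term is the modulus of continuity of the IDS, and this is the main obstacle. For a bounded but possibly singular single\nobreakdash-site distribution no Wegner (H\"older) continuity is available, so one must prove only a \emph{logarithmic} modulus $\tids(E)-\tids(E-\eta)\le C(d,E_0,A)\,(\log\tfrac1\eta)^{-\kappa_d}$ with the sharp dimension\nobreakdash-dependent exponents $\kappa_1=1,\kappa_2=\tfrac14,\kappa_3=\tfrac18$; indeed, were the modulus a power of $\eta$ the optimization would produce a power of $\krm$ rather than the stated logarithmic rate, so the degradation is genuine. I expect to establish this log\nobreakdash-H\"older continuity of the continuum IDS separately, via Combes--Thomas resolvent decay (or again the heat\nobreakdash-kernel/subordination estimates), the dimension entering through the resolvent bounds; this is where the exponents $\kappa_d$ are produced and is the technical heart of the argument. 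Granting it, the choice $\eta=\krm^{1/2}$ renders the first term negligible and $(\log\tfrac1\eta)^{-\kappa_d}\sim(\log\tfrac1{\krm})^{-\kappa_d}$, yielding \eqref{ids-cont}.
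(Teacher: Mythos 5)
Your overall skeleton matches the paper's: pass to the bounded transforms $(1+H_\Lambda)^{-\alpha}$, bound the trace norm of their difference linearly in $\sum_j|v_j-\tilde v_j|$, apply the infinite-dimensional Ky Fan inequality together with Kantorovich--Rubinstein duality to get \eqref{dos-cont}, and then convert the Wasserstein bound into a sup-norm bound by playing a ramp test function against the modulus of continuity of $N$. There are, however, two genuine gaps. First, your trace-norm estimate via the $\Gamma$-integral, Duhamel's formula and Feynman--Kac heat-kernel diagonals is a legitimate alternative to the paper's route (which interpolates between the potentials one site at a time and uses the Birman--Solomyak double-operator-integral formula together with Weyl's law to bound $\left\|\mathbf{1}_{\supp u}(-\Delta_\Lambda+1)^{-(\alpha+1)/2}\mathbf{1}_{\supp u}\right\|_2$ uniformly in $\Lambda$ whenever $\alpha+1>\frac{d}{2}$), but it only closes for $d\leq 3$: the factor $(s(t-s))^{-d/4}$ is integrable near the endpoints only for $d<4$, and your proposed fix of ``splitting $W$ across higher-order Duhamel terms'' does not work as stated, since every higher-order Duhamel term carries additional factors of $W$ and is therefore at least quadratic in the differences $|v_j-\tilde v_j|$, destroying the linear bound you need before taking expectations. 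Since \eqref{dos-cont} is asserted for all $d$ with $\alpha>\frac{d}{2}-1$, this step is incomplete for $d\geq 4$.

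Second, and more seriously, the log-H\"older continuity of the continuum IDS with the precise exponents $\kappa_1=1$, $\kappa_2=1/4$, $\kappa_3=1/8$ is not something you can ``establish separately via Combes--Thomas resolvent decay'': Combes--Thomas controls off-diagonal decay of the resolvent, not the concentration of eigenvalues in small energy intervals, and for a general (possibly singular) single-site distribution no Wegner-type estimate is available. This continuity is precisely the theorem of Bourgain and Klein \cite{bk}, whose proof rests on quantitative unique continuation and which is the source of the restrictions $d\leq 3$ and $\supp\mu\subset[0,A]$ in the statement; the paper simply imports it. Once that unproven claim is replaced by the citation, your optimization over $\eta$ does deliver \eqref{ids-cont}, with the dependence on $E_0$ entering through the bi-Lipschitz constant of $E\mapsto(1+E)^{-1}$ on $(-\infty,E_0]$, exactly as in the paper's explicit test function \eqref{eq:f-ids-cont}.
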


\begin{rmk} The following example shows that the condition $\alpha > \frac{d}{2} - 1$ is optimal in general, and in particular one can not expect a result of the same form as in the discrete case (which would correspond to $\alpha = -1$).

Assume $\alpha \leq \frac{d}{2} - 1$. Let $u$ be such that $\sum_{j\in\Z^d} u(x - j) \equiv 1$, and let $v_j \equiv 0, \tilde v_j \equiv \delta, f_\epsilon(x) = \max((x - \epsilon), 0)$. The integrated density of states of the free Lapacian is given by
\[
d\rho (\lambda) = C_d \lambda^{\frac{d}{2} - 1}d\lambda.
\]
Therefore we have
\begin{equation}\label{eq:one}
\int f((1 + \lambda)^{-\alpha}) d\rho(\lambda) = C_d\int_0^{\epsilon^{-1/\alpha} - 1} ((1 + \lambda)^{-\alpha} - \epsilon) \lambda^{\frac{d}{2} - 1}d\lambda,
\end{equation}
\begin{equation}\label{eq:two}
\int f((1 + \lambda)^{-\alpha}) d\tilde\rho(\lambda) = C_d\int_0^{\epsilon^{-1/\alpha} - 1 - \delta} ((1 + \lambda + \delta)^{-\alpha} - \epsilon) \lambda^{\frac{d}{2} - 1}d\lambda,
\end{equation}
and for any $\delta > 0$ we obtain
\[
\begin{split}
&\liminf_{\epsilon\to 0} \left|\int f((1 + \lambda)^{-\alpha}) d\rho(\lambda) - \int f((1 + \lambda)^{-\alpha}) d\tilde\rho(\lambda)\right|  \\ 
& \geq \liminf_{\epsilon\to 0} C_d \int_0^{\epsilon^{-1/\alpha} - 1 - \delta} ((1 + \lambda)^{-\alpha} - (1 + \lambda + \delta)^{-\alpha}) \lambda^{\frac{d}{2} - 1}d\lambda  \\
&\geq \liminf_{\epsilon\to 0} C_d\alpha\delta \int_0^{\epsilon^{-1/\alpha} - 1 - \delta} (1 + \lambda + \delta)^{-\alpha - 1} \lambda^{\frac{d}{2} - 1}d\lambda = \infty.
\end{split}
\]
\end{rmk}

\begin{rmk} The restrictions on the dimension and on $V$ in the second part of Theorem~\ref{th:cont} come from the work of Bourgain and Klein \cite{bk} which we use to deduce (\ref{ids-cont}) from (\ref{dos-cont}).
\end{rmk} 
\begin{rmk} The restriction $\supp\, \mu\subset [0, A]$, also coming from \cite{bk}, can be relaxed using the work of Klein and Tsang \cite[Theorem 1.3]{kt}.

\end{rmk}
\begin{rmk} Theorem~\ref{th:cont} formally implies a similar result for sign-indefinite $V$ bounded from below. \end{rmk}

\section{Preliminaries}\label{sec:prel}

\subsection{Discrete case} The main ingredient of the proof of Theorem~\ref{th:disc} is the Ky Fan inequality \cite{li}:

Assume that $A, B$, and $\tilde A = A+B$ are linear self-adjoint operators that act on $n$-dimensional Euclidean space. Let $\lambda_n\leq\lambda_{n - 1}\leq\cdots\leq\lambda_1,\, e_n\leq e_{n - 1}\leq\cdots\leq e_1,\ \tilde\lambda_n \leq\tilde\lambda_{n - 1}\leq\cdots\leq\tilde\lambda_1$ be the eigenvalues of $A, B$, and $\tilde A$ respectively. Then, for any continuous convex function $\phi: \R \rightarrow \R$ 
\begin{equation}\label{eq:conv}
\sum_{j = 1}^n \phi (\tilde\lambda_j - \lambda_j) \leq \sum_{j = 1}^n \phi (e_j).~
\end{equation}
In particular,

\begin{equation}\label{eq:mod}
\sum_{j = 1}^n |\tilde\lambda_j - \lambda_j| \leq \sum_{j = 1}^n |e_j|.~
\end{equation}
To deduce (\ref{ids}) from (\ref{dos}) (similarly to \cite{hm}) we shall use the following result due to Craig and Simon \cite{cs}.
Denote by
\begin{equation}\label{mod-cont}
\omega(\delta) = \sup\, \{ |\rho(E) - \rho(E')|\, : \,\, E' < E \leq E + \delta\},~
\end{equation}
 the modulus of continuity of $\rho$. Then (\cite{cs}) the measure $\rho$ with the cumulative distribution function $N$ (the IDS) of any ergodic Schr\"odinger operator on $\elz$ is $\log$-H\"older continuous, namely, for any $\delta \in (0, \frac{1}{2}]$
\begin{equation}\label{log-holder}
\omega(\delta)\leq\frac{C}{\log\frac{1}{\delta}},~
\end{equation}
where $C > 0$ is a universal constant.

\subsection{Continual case} First, recall that for $1 \leq p < \infty$ the Schatten class $S_p$ is the class of all compact operators in a given Hilbert space such that
\[
\|A\|_p =\left( \sum_{n=1}^\infty s_n(A)^p  \right)^{1/p} < \infty,~
\]
where $\{ s_n(A)\}$ is the sequence of all singular values of the operator $A$ enumerated with multiplicities taken into account. The class $S_\infty$ consists of all compact operators.

\smallskip

The main ingredient in the proof of Theorem~\ref{th:cont} is the following version of the Ky Fan inequality (see Markus \cite{ma}).

If $A\in S_1$, $B\in S_\infty$ that are self-adjoint, and $\tilde A = A+B$,  $\lambda_1\geq\lambda_2\geq\cdots ,\, e_1\geq e_2\geq\cdots ,\, \tilde\lambda_1 \geq\tilde\lambda_2\geq\cdots$, are the eigenvalues of $A, B$, and $\tilde A$ respectively, then, for any continuous convex function $\phi: \R \rightarrow \R$ with $\phi(0) = 0$
\begin{equation}\label{eq:conv-infty}
\sum_{j = 1}^\infty \phi (\tilde\lambda_j - \lambda_j) \leq \sum_{j = 1}^\infty \phi (e_j).~
\end{equation}
In particular,

\begin{equation}\label{eq:mod-infty}
\sum_{j = 1}^\infty |\tilde\lambda_j - \lambda_j| \leq \sum_{j = 1}^\infty |e_j| = \|B\|_1.~
\end{equation}

To deduce (\ref{ids-cont}) from (\ref{dos-cont}) we will need the following result due to Bourgain and Klein \cite{bk}.

\begin{theorem*}[BK] Assume that $H$ as in (\ref{def-opHcont})--(\ref{def-v}) on $\elr$, $d = 1, 2, 3$, with $\supp\, \mu \subset [-A, A]$. Let $N$ be the corresponding IDS. Then, given $E_0\in\R$, for all $E\leq E_0$ and $\delta \leq 1/2$
\begin{equation}\label{eq:bk}
|N(E) - N(E + \delta)| \leq \frac{C(d, E_0, A)}{\log^{\kappa_d}\frac{1}{\delta}},~
\end{equation}
where  $C(d, E_0, A) > 0$, and $\kappa_1 = 1, \kappa_2 = 1/4, \kappa_3 = 1/8$.~
\end{theorem*}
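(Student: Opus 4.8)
The plan is to establish (\ref{eq:bk}) through the Pastur--Shubin trace formula together with a finite-volume Wegner-type estimate whose modulus degrades only logarithmically because the needed regularity is supplied by the kinetic energy rather than by the single-site distribution $\mu$ (which may be singular). First I would write the density-of-states increment as
\begin{equation*}
N(E+\delta) - N(E) = \E\,\tr\big(\chi_{\Lambda_1}\, \mathbf{1}_{(E, E+\delta]}(H)\, \chi_{\Lambda_1}\big),
\end{equation*}
where $\chi_{\Lambda_1}$ is the indicator of the unit cube and $\mathbf{1}_I(H)$ is the spectral projection of $H$ onto $I$ obtained by functional calculus. By ergodicity this equals $\lim_{L\to\infty} L^{-d}\,\E\,\tr\,\mathbf{1}_{(E,E+\delta]}(\hrestr)$, so it suffices to bound the expected number of Dirichlet eigenvalues of $\hrestr$ in the window $(E, E+\delta]$ on a box $\Lambda = \Lambda_\ell$ of a side length $\ell$ to be chosen at the end.

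The core is a Wegner bound valid for an \emph{arbitrary} $\mu$ supported in $[-A,A]$. Its two ingredients are: (i) a quantitative, scale-free unique continuation principle in the spirit of Bourgain--Kenig, asserting that every eigenfunction $\psi$ of $\hrestr$ with eigenvalue at most $E_0$ satisfies $\langle \psi, W_\Lambda \psi\rangle \ge \gamma(\ell)\,\|\psi\|^2$, where $W_\Lambda = \sum_j u(\,\cdot\, - j)$ is the array of bumps and the constant is exponentially small, $\gamma(\ell) \ge \exp(-c\,\ell^{4/3}\log\ell)$; and (ii) a spectral-averaging argument performed with respect to a common (global) shift $V_\omega \mapsto V_\omega + t\,W_\Lambda$ of all the couplings, which through the Birman--Schwinger principle controls $\int_0^s \tr\,\mathbf{1}_{(E, E+\delta]}(\hrestr + tW_\Lambda)\,dt$ \emph{without any regularity hypothesis on $\mu$}. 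Combining (i), which renders $W_\Lambda$ uniformly positive on the relevant spectral subspace, with (ii), and transferring the $t$-averaged bound back to the original ensemble via the i.i.d.\ product structure, would yield
\begin{equation*}
\frac{1}{\ell^d}\,\E\,\tr\,\mathbf{1}_{(E, E+\delta]}(\hrestr) \le C\Big(\gamma(\ell)^{-1}\,\delta + R_d(\ell)\Big),
\end{equation*}
where $R_d(\ell)$ collects the finite-volume (boundary) and free Weyl-counting errors and decays polynomially in $\ell$ at a dimension-dependent rate.

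It then remains to balance the two terms. Choosing $\ell \asymp (\log\tfrac1\delta)^{3/4}$, so that $\gamma(\ell)^{-1}\delta \lesssim \delta^{1/2}$ is smaller than any power of $\log\tfrac1\delta$, leaves the bound governed entirely by the polynomially small error $R_d(\ell)$; tracking the dimension dependence of its decay rate converts, after this choice of $\ell$, into the stated powers $\kappa_1 = 1,\ \kappa_2 = 1/4,\ \kappa_3 = 1/8$. The restriction $d \le 3$ enters at two places: the free resolvent $\chi_\Lambda(-\Delta - z)^{-1}$ has a diagonally bounded, integrable kernel precisely in low dimension, which is what makes the trace estimates in step (ii) quantitative, and the available form of the quantitative unique continuation is sharp only in these dimensions.

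I expect the main obstacle to be step (i): proving the quantitative unique continuation principle with an explicit scale-dependent constant $\gamma(\ell)$ carrying the correct $\ell^{4/3}$ exponent. This rests on delicate Carleman estimates for $-\Delta + V$ with bounded $V$, and it is exactly this exponentially small constant---rather than any smoothness of $\mu$---that forces the logarithmic, as opposed to H\"older, modulus in (\ref{eq:bk}). A secondary difficulty will be making the global-shift spectral averaging compatible with the singularity of $\mu$, which requires care in transferring the $t$-integrated estimate back to a pointwise-in-$\omega$ bound without invoking the concentration function of $\mu$.
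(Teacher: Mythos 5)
Note first that the paper does not prove this statement at all: it is imported verbatim from Bourgain--Klein \cite{bk}, so your proposal has to be measured against the proof given there. Your ingredient (i) is indeed the engine of \cite{bk}: a quantitative unique continuation principle with the Bourgain--Kenig constant $\exp(-c\,\ell^{4/3}\log\ell)$ is exactly what produces a logarithmic, rather than H\"older, modulus. The fatal problem is ingredient (ii) together with the transfer step after it, which you downgrade to a ``secondary difficulty.'' Averaging over a global shift $t$, $H_\Lambda \mapsto H_\Lambda + tW_\Lambda$, controls the $t$-integrated eigenvalue count of the \emph{shifted} operators; to convert this into a bound on $\E\,\tr\,\mathbf{1}_{(E,E+\delta]}(H_\Lambda)$ for the \emph{original} ensemble, the law of the couplings must absorb the shift, and that is possible precisely when $\mu$ has quantitative regularity (a bounded density, or at least a controlled concentration function). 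The theorem assumes only $\supp\mu\subset[-A,A]$, so $\mu$ may be Bernoulli; in that case no such transfer is known, and a Wegner-type bound $\E\,\tr\,\mathbf{1}_{(E,E+\delta]}(H_\Lambda)\le C\ell^d\left(\gamma(\ell)^{-1}\delta+R_d(\ell)\right)$ valid at all energies $E\le E_0$ is an open problem --- it is exactly this obstruction that forces the multiscale machinery with ``free sites'' in Bernoulli--Anderson localization, where such estimates are obtained only inductively and only near the bottom of the spectrum. So the chain (ii) $\Rightarrow$ Wegner $\Rightarrow$ (\ref{eq:bk}) cannot be completed as described. (A smaller quantitative slip: with $\ell\asymp(\log\frac{1}{\delta})^{3/4}$ one gets $\gamma(\ell)^{-1}\delta=\delta^{1-c\log\ell}$, which diverges; the $\log\ell$ factor in the exponent forces $\ell$ to be chosen slightly smaller.)

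What \cite{bk} actually does sidesteps probability entirely: the result proved there is a \emph{deterministic} bound on a density-of-states outer measure, valid for every fixed potential whose sup-norm is controlled by $A$ and $u$; randomness and ergodicity enter only to guarantee that the IDS exists and is dominated by this outer measure. In outline, for a box of side $\ell$ and $\delta\le\exp(-C\ell^{4/3}\log\ell)$, one shows that $H_\Lambda$ cannot have too many eigenvalues in a window of width $\delta$: a suitable linear combination $\psi$ of the corresponding eigenfunctions can be made small near every point of a coarse net of the box (fewer linear constraints than eigenfunctions) while still satisfying $\|(H_\Lambda-E)\psi\|\le\delta\|\psi\|$, and the quantitative unique continuation principle then forces $\|\psi\|$ to be exponentially small, a contradiction. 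Unwinding the relation between $\delta$ and $\ell$ yields the $\log^{-\kappa_d}$ modulus; the exponents $\kappa_2=1/4$, $\kappa_3=1/8$ and the restriction $d\le 3$ are produced by the dimension-dependent way this counting argument is implemented in \cite{bk} (the unique continuation principle itself holds in every dimension), not by integrability of the free resolvent kernel as you suggest. To rescue your outline, replace (ii) by this deterministic counting step: no spectral averaging, and no property of $\mu$ beyond its support, should appear anywhere in the argument.
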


\section{Proof of Theorem~\ref{th:disc} and Theorem~\ref{th:cont}}
\subsection{Proof of Theorem~\ref{th:disc}} Denote by $\Lambda\subset\Z^d$ a finite box and let (in the notation of Ky Fan's inequality)
\[
A = H_\Lambda = (-\Delta + V)_\Lambda,\,\, \tilde A = \tilde H_\Lambda =  (-\Delta + \tilde V)_\Lambda,
\]
be the restrictions of the operators $H$ and $\tilde H$ to the box $\Lambda$. Then, 

\begin{equation}\label{eq:est-trf}
\begin{split}
|\tr f(A) - \tr f(\tilde A)| & = |\sum_{j=1}^{|\Lambda|}f(\lambda_j) - \sum_{j=1}^{|\Lambda|}f(\tilde\lambda_j)| \\&\leq 
\sum_{j=1}^{|\Lambda|} |f(\lambda_j) - f(\tilde\lambda_j)| \leq  \sum_{j=1}^{|\Lambda|} |\lambda_j - \tilde\lambda_j|\\&\leq
\sum_{j=1}^{|\Lambda|} |e_j| = \sum_{x\in\Lambda} |V(x) - \tilde V(x)|,~
\end{split}
\end{equation}
where the second inequality holds since $f$ is $1$-Lipschitz and the last inequality follows from  (\ref{eq:conv}).

\smallskip

By (\ref{def-kr-exp}) there is a realization of $V$ and $\tilde V$ on a common probability space such that
\[
\E|V(x) - \tilde V(x)|\leq\krm.~
\]
Thus, using (\ref{eq:est-trf}) for any $1$-Lipschitz function $f$, we obtain
\begin{equation}\label{eq:est-tr-exp-f}
|\E\,\tr\, f(A) - \E\,\tr\, f(\tilde A)| \leq \E \sum_{x\in\Lambda} |V(x) - \tilde V(x)| \leq |\Lambda| \krm.~
\end{equation}
Since
\[
\int f d\rho = \lim_{\Lambda\nearrow\Z^d}\frac{1}{|\Lambda|}\E\,\tr\, f(A),~
\]
we obtain by passing to the limit $\Lambda\nearrow\Z^d$
\begin{equation}\label{eq:final-est-rho}
\krr\leq\krm,~
\end{equation}
thus concluding the proof of (\ref{dos}).

\smallskip
 
To deduce (\ref{ids}), we choose

\begin{equation}\label{eq:f-ids}
f(x) = \begin{cases}
\delta, &x\leq E\\
-x + E + \delta, & E \leq x \leq E + \delta\\
0, &x \geq E + \delta,~
\end{cases}
\end{equation}
for $\delta > 0$. Then, by definition of the IDS, we get for any $E\in\R$

\begin{equation}\label{eq:est-N}
\delta N(E) \leq \int f(E) d\rho (E) \leq \delta N(E + \delta),~
\end{equation}

\begin{equation}\label{eq:est-tN}
\delta \tilde N(E) \leq \int f(E) d\tilde\rho (E) \leq \delta \tilde N(E + \delta).~
\end{equation}
Since
\[
 \int f(E) d\tilde\rho (E) = \int f(E) d\rho (E) + \int f(E) d(\tilde\rho - \rho) (E),~
\]
combining (\ref{eq:final-est-rho}), (\ref{eq:est-N}), and (\ref{eq:est-tN}), we obtain
\begin{equation*}
\delta \tilde N(E) \leq \delta N(E + \delta) + \krm,~
\end{equation*}
namely
\begin{equation}\label{eq:upper-bound}
\tilde N(E) \leq  N(E + \delta) + \frac{\krm}{\delta}.~
\end{equation}
In the same way we get

\begin{equation}\label{eq:lower-bound}
\tilde N(E) \geq  N(E - \delta) - \frac{\krm}{\delta}.~
\end{equation}

Let $\omega$ be the modulus of continuity of $N$. Combining (\ref{log-holder}), (\ref{eq:upper-bound}), and (\ref{eq:lower-bound}), we obtain
\begin{equation}\label{eq:final-ids}
\sup_E |N(E) - \tilde N(E)| \leq \inf_\delta \left(\omega(\delta) + \frac{\krm}{\delta}\right) \leq \frac{C}{\log_+\frac{1}{\krm}},~
\end{equation}
where $C > 0$ is a constant and we choose $\delta = \krm/\omega(\krm)$.~ This finishes the proof of (\ref{ids}).

\qed

\begin{rmk} If the operator $H$ is such that the modulus of continuity $\omega$ satisfies
\[
\omega(\delta) \leq C\delta^a,~
\]
for some $C, a > 0$, then (\ref{eq:final-ids}) implies that
\[
\sup_E |N(E) - \tilde N(E)| \leq \inf_\delta \left( C\delta^a + \frac{\krm}{\delta} \right) \leq \tilde C\, \krm^{1/(1 + a)}.~
\] 
\end{rmk}

\subsection{Proof of Theorem~\ref{th:cont}} Let
\[
 H_\Lambda = (-\Delta + V)_\Lambda,\,\, \tilde H_\Lambda =  (-\Delta + \tilde V)_\Lambda,
\]
be the restrictions of the operators $H$ and $\tilde H$ to a finite box $\Lambda\in\R^d$ with Dirichlet boundary conditions. Let (in the notation of Ky Fan's inequality)

\[
A =  (H_\Lambda + 1)^{-\alpha},\,\, \tilde A = (\tilde H_\Lambda + 1)^{-\alpha}.~
\]
We have the following
\begin{claim}\label{cl:pf-thm2} If $\alpha > \frac{d}{2} - 1$, then
\[
\| A - \tilde A\|_1 \leq C(d, u, \alpha)\sum_{j\in 2\Lambda \cap\Z^d} |v_j - \tilde v_j|~.
\]
\end{claim}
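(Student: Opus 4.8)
The plan is to bound the trace norm of the difference of fractional powers by linearizing in the potential and reducing to a single--site estimate that is \emph{uniform in the box} $\Lambda$. Writing $W = V - \tilde V = \sum_{j} w_j\, u(\cdot - j)$ with $w_j = v_j - \tilde v_j$, and recalling $\tilde H_\Lambda - H_\Lambda = \tilde V - V = -W$, I would first produce a representation of $A - \tilde A$ that is linear in $W$ and splits into a sum over $j\in 2\Lambda\cap\Z^d$ of contributions localized (through $u(\cdot-j)$) near the site $j$. The triangle inequality for $\|\cdot\|_1$ then reduces the claim to a bound of the form $\|(j\text{-th term})\|_1 \le C(d,u,\alpha)\,|w_j|$, uniform in $j$ and, crucially, in $\Lambda$; only sites with $j\in 2\Lambda\cap\Z^d$ contribute since $u$ is compactly supported and $\Lambda$ is a cube.

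For the representation I would use the Laplace transform of the heat semigroup together with Duhamel's formula (equivalently, the double operator integral associated with $F(x)=x^{-\alpha}$):
\[
(H_\Lambda+1)^{-\alpha} = \frac{1}{\Gamma(\alpha)}\int_0^\infty t^{\alpha-1}e^{-t}\,e^{-tH_\Lambda}\,dt,\qquad e^{-tH_\Lambda}-e^{-t\tilde H_\Lambda} = -\int_0^t e^{-(t-\tau)\tilde H_\Lambda}\,W\,e^{-\tau H_\Lambda}\,d\tau .
\]
The point of this representation is that it keeps $W$ linear while giving a total smoothing budget of $\alpha+1$ to be shared between the two operators; the prefactor $\Gamma(\alpha-\tfrac d2+1)$ that governs the $t$--integration is finite precisely when $\alpha>\tfrac d2-1$, which is where the hypothesis enters.

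The analytic heart of the argument is a localized Schatten bound that is uniform in $\Lambda$. Because $\supp\mu,\supp\tilde\mu\subset\R_+$ and $u\ge0$, the potentials are nonnegative, so by the Feynman--Kac formula together with the Dirichlet boundary conditions the heat kernel is dominated by the free Gaussian, $0\le e^{-\tau H_\Lambda}(x,y)\le (4\pi\tau)^{-d/2}e^{-|x-y|^2/(4\tau)}$. Writing $u(\cdot-j)=\sqrt{u(\cdot-j)}\cdot\sqrt{u(\cdot-j)}$ and computing a Hilbert--Schmidt norm, this domination gives
\[
\big\|\sqrt{u(\cdot-j)}\,e^{-\tau H_\Lambda}\big\|_2^2 = \int u(y-j)\Big(\int_\Lambda e^{-\tau H_\Lambda}(x,y)^2\,dx\Big)dy \le c_d\,\|u\|_{L^1}\,\tau^{-d/2},
\]
uniformly in $j$ and $\Lambda$; interpolating with the trivial $S_\infty$ bound and passing to resolvent powers through the Laplace transform yields $\big\|\sqrt{u(\cdot-j)}\,(H_\Lambda+1)^{-\gamma}\big\|_p\le C$ whenever $2\gamma p>d$. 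I would then insert the factorization of $u(\cdot-j)$ into each $j$-th term and apply H\"older for Schatten norms. For $d\le 3$ the symmetric split (half the smoothing on each side, $S_2\times S_2\to S_1$) closes immediately, and the parameter integrals converge exactly under $\alpha>\tfrac d2-1$; for general $d$ one must distribute the budget $\alpha+1$ asymmetrically (power $a\le1$ against power $b=\alpha+1-a$) and use $S_p\times S_{p'}\to S_1$ with $1/p+1/p'=1$, which is feasible iff $a+b=\alpha+1>\tfrac d2$.

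I expect the main obstacle to be the uniformity in $\Lambda$, since without it one cannot pass to the thermodynamic limit in the theorem; this is exactly what the heat--kernel domination resolves, and it is the step where the positivity hypotheses on $\mu,\tilde\mu$ and $u$ are used. A secondary difficulty, invisible for $d\le3$, is that in high dimension the naive symmetric Hilbert--Schmidt splitting diverges (the relevant parameter integral becomes a divergent Beta integral), which forces the asymmetric distribution of the smoothing and hence the slightly more delicate $S_p$--bookkeeping described above.
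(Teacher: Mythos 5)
Your route is genuinely different from the paper's and, for $d\le 3$, it works. The paper telescopes over the lattice sites, replacing $v_{j_k}$ by $\tilde v_{j_k}$ one at a time, and controls each single-site step $\|(H^{(k)}_\Lambda+1)^{-\alpha}-(H^{(k+1)}_\Lambda+1)^{-\alpha}\|_1$ by the Birman--Solomyak double-operator-integral bound, which produces terms of the form $|\alpha|\,\|(H^{(k)}_\Lambda+1)^{-(\alpha+1)}u_{j_k}\|_1$ with the \emph{entire} smoothing power $\alpha+1$ sitting on a single resolvent factor adjacent to $u_{j_k}$; that trace norm is then rewritten as the square of a Hilbert--Schmidt norm of $\sqrt{u_{j_k}}(H^{(k)}_\Lambda+1)^{-(\alpha+1)/2}$, compared with the free Dirichlet Laplacian using positivity of the potential, and bounded uniformly in $\Lambda$ by Weyl's law under $\alpha+1>\frac d2$. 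Your Laplace--Duhamel representation replaces both the telescoping and the Birman--Solomyak formula in one stroke (linearity in $W$ makes the site decomposition automatic), and your heat-kernel domination $0\le e^{-\tau H_\Lambda}(x,y)\le(4\pi\tau)^{-d/2}e^{-|x-y|^2/(4\tau)}$ is a legitimate substitute for the paper's Weyl-law step: it uses the positivity hypotheses in exactly the same place and gives the same uniformity in $\Lambda$ and $j$. For $d\le3$ the symmetric $S_2\times S_2$ estimate closes, with the Beta- and Gamma-integrals converging precisely when $\alpha+1>\frac d2$ (together with $\alpha>0$, which the Laplace representation itself requires), so in the dimensions relevant to \eqref{ids-cont} your argument is complete and arguably more elementary.

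The gap is in $d\ge4$; note that Claim~\ref{cl:pf-thm2} and \eqref{dos-cont} are stated for all $d$. Your proposed fix --- distributing the smoothing asymmetrically and applying $S_p\times S_{p'}\to S_1$ with $\frac1p+\frac1{p'}=1$ --- does not close as described: the localized bounds $\|\sqrt{u_j}\,e^{-\tau H_\Lambda}\|_p\le C\tau^{-d/(2p)}$ obtained by interpolating the Hilbert--Schmidt estimate against the trivial operator-norm bound are available only for $p\ge2$, and a conjugate pair with both exponents $\ge2$ forces $p=p'=2$, i.e.\ exactly the symmetric split whose $\tau$-integral $\int_0^t((t-\tau)\tau)^{-d/4}\,d\tau$ diverges for $d\ge4$. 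Localized $S_p$ bounds with $p<2$ do not follow from pointwise kernel domination (which immediately controls the Hilbert--Schmidt and operator norms but is delicate or false for $p<2$), so a genuinely new input would be needed there. This is precisely what the paper's appeal to the Birman--Solomyak formula buys: it converts the difference of fractional powers into terms in which all of the power $\alpha+1$ sits on one side of $u_{j_k}$, so the symmetric Hilbert--Schmidt factorization of that single factor suffices in every dimension under $\alpha+1>\frac d2$. To repair your argument for $d\ge4$, either import the Birman--Solomyak bound as the paper does, or restructure the Duhamel step so that one factor always carries a fixed proportion of the total smoothing rather than an amount that degenerates at the endpoints of the $\tau$-integral.
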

\begin{proof}
Let us number
\[
2\Lambda \cap\Z^d = \{ j_1, \dots, j_n\},\,\,\, n\leq C|\Lambda|~,
\]
and let $H^{(k)}_\Lambda,\, 0\leq k \leq n+1$, be the (restricted) operator corresponding to the potential
\[
V(x) = \sum_{l < k} \tilde v_{j_l}u(x - j_l) + \sum_{l \geq k} v_{j_l}u(x - j_l)~.
\]
Observe that
\begin{equation}\label{eq:cl}
\begin{split}
&\| (H^{(k)}_\Lambda + 1)^{-\alpha} - (H^{(k + 1)}_\Lambda + 1)^{-\alpha}\|_1 \leq \\& |\alpha|\left(\left\| \frac{1}{(H^{(k)}_\Lambda + 1)^{\alpha + 1}} (H^{(k)}_\Lambda - H^{(k + 1)}_\Lambda)\right\|_1  + \left\| (H^{(k)}_\Lambda - H^{(k + 1)}_\Lambda) \frac{1}{(H^{(k + 1)}_\Lambda + 1)^{\alpha + 1}} \right\|_1\right)~,
\end{split}
\end{equation}
as follows, for example, from the Birman-Solomyak formula \cite[Theorem 8.1]{bs} (note that for $\alpha = 1$ it suffices to use the second resolvent identity).
Then we have
\[
\begin{split}
\left\|  \frac{1}{(H^{(k)}_\Lambda + 1)^{\alpha + 1}}u_{j_k}\right\|_1  = &\left\| \frac{1}{(H^{(k)}_\Lambda + 1)^{\frac{\alpha + 1}{2}}}\sqrt{u_{j_k}}\sqrt{u_{j_k}}\frac{1}{(H^{(k)}_\Lambda + 1)^{\frac{\alpha + 1}{2}}}  \right\|_1 \\= &  \left\|  \sqrt{u_{j_k}}\frac{1}{(H^{(k)}_\Lambda + 1)^{\frac{\alpha + 1}{2}}}\right\|_2^2 \\ = & \left\| {u_{j_k}}^{1/4}\frac{1}{(H^{(k)}_\Lambda + 1)^{\frac{\alpha + 1}{2}}}{u_{j_k}}^{1/4}\right\|_2^2 \\  \leq & \left\| {u_{j_k}}^{1/4}\frac{1}{(-\Delta_\Lambda + 1)^{\frac{\alpha + 1}{2}}}{u_{j_k}}^{1/4}\right\|_2^2  \\ \leq & \|u\|_\infty \left\| \mathbf{1}_{\mathrm{supp}\,\,u}\frac{1}{(-\Delta_\Lambda + 1)^{\frac{\alpha + 1}{2}}}\mathbf{1}_{\mathrm{supp}\,\,u}\right\|_2^2~, 
\end{split}
\]
where the first inequality follows from the positivity of the potential. A similar bound holds for the second term of (\ref{eq:cl}). By Weyl's law the last norm is bounded by $C(d, u, \alpha)$ (uniformly in $\Lambda$) whenever $\alpha + 1 > \frac{d}{2}$~.
Thus, we obtain
\[
\| (H^{(k)}_\Lambda + 1)^{-\alpha} - (H^{(k + 1)}_\Lambda + 1)^{-\alpha}\|_1 \leq C(d, u, \alpha)|v_{j_k} - \tilde v_{j_k}|.
\]

\end{proof}

%
By the Kantorovich-Rubinstein duality (\ref{def-kr-exp}) there is a realization of $v$ and $\tilde v$ on a common probability space such that
\[
\E|v_j - \tilde v_j|\leq\krm.~
\]
Thus, using Claim~\ref{cl:pf-thm2}, we get for $\alpha > \frac{d}{2} - 1$ 

\[
\begin{split}
\E \| A - \tilde A\|_1 & \leq C(d, u, \alpha)\E\sum_{j\in 2\Lambda \cap\Z^d} |v_j - \tilde v_j|  =  C(d, u, \alpha) \sum_{j\in 2\Lambda\cap\Z^d} \E  | \tilde v_j - v_j| \\&
\leq C(d, u, \alpha) |\Lambda|\krm.~
\end{split}
\]


The eigenvalues of $A$ are exactly $\frac{1}{(1 + \lambda_j)^\alpha}$, where $\lambda_j$ are the eigenvalues of $H_\Lambda$, thus using (\ref{eq:mod-infty}), we obtain for $\alpha > \frac{d}{2} - 1$ and for any $1$-Lipschitz function $f$ for which $\int f\left(\frac{1}{(1 + E)^\alpha}\right) d\rho(E)$ converges (in particular, $f(0) = 0$)

\begin{equation}\label{eq:final-step-dos-cont}
\begin{split}
&\left| \sum_{j=1}^\infty f\left(\frac{1}{(1 + \lambda_j)^\alpha}\right) - \sum_{j=1}^\infty f\left(\frac{1}{(1 + \tilde\lambda_j)^\alpha}\right) \right| \\& \leq  \sum_{j=1}^\infty \left|  f\left(\frac{1}{(1 + \lambda_j)^\alpha}\right) -  f\left(\frac{1}{(1 + \tilde\lambda_j)^\alpha}\right)\right| \\& \leq \sum_{j=1}^\infty \left|\frac{1}{(1 + \lambda_j)^\alpha} - \frac{1}{(1 +\tilde \lambda_j)^\alpha} \right| \leq C(u, d, \alpha)\, |\Lambda| \, \krm,~
\end{split}
\end{equation}
where the last step follows from the Ky Fan inequality. Using the definition (\ref{eq:ids-cont}) of $\rho$ and passing to the limit $\Lambda\nearrow\R^d$, we conclude that if $\alpha > \frac{d}{2} - 1$, then
\begin{equation}\label{eq:dos-cont}
\left|\int f\left(\frac{1}{(1 + E)^\alpha}\right) d\rho(E) - \int f\left(\frac{1}{(1 + E)^\alpha}\right) d\tds(E)\right|\leq C(d, u, \alpha) \krm~.
\end{equation} 
Thus we complete the proof of (\ref{dos-cont}).

To deduce (\ref{ids-cont}), we define

\begin{equation}\label{eq:f-ids-cont}
f(x) = \begin{cases}
\frac{\delta}{2(1 + E)^2}, &x\geq \frac{1}{1 + E}\\
\frac{1 + E + \delta}{2(1 + E)}x - \frac{1}{2(1 + E)}, & \frac{1}{1 + E + \delta} \leq x \leq \frac{1}{1 + E}\\
0, &x \leq \frac{1}{1 + E + \delta},~
\end{cases}
\end{equation}
for $\delta > 0$. Then, by the definition of the IDS  (\ref{eq:ids-cont}) we get for a fixed $E_0\in \R$ in the same way as in the proof of Theorem~\ref{th:disc}

\begin{equation}\label{eq:isd-cont-upper}
\tilde N(E_0) \leq N(E_0 + \delta) + \frac{2\krm\, (1 + E_0)^2}{\delta},~
\end{equation}

\begin{equation}\label{eq:isd-cont-lower}
\tilde N(E_0) \geq N(E_0 - \delta) - \frac{2\krm\, (1 + E_0)^2}{\delta}.~
\end{equation}

Let $\omega$ be the modulus of continuity of $N$. Then, by (\ref{eq:bk}) for any $E \leq E_0$ and $\delta \leq 1/2$
\[
|N (E) - N(E + \delta)| \leq \frac{C(d, E_0, A)}{\log_+^{\kappa_d}\frac{1}{\delta}}.~
\]
Thus, choosing $\delta = \frac{C (d, E_0, A)\, \krm}{\omega(C (d, E_0, A)\, \krm)}$, we obtain
\[
\sup_{E \leq E_0}|\tilde N(E) - N(E)| \leq \frac{\tilde C (d, E_0, A)}{\log_+^{\kappa_d}\frac{1}{\krm}},~
\]
therefore completing the proof.

\qed
\medskip

\textit{Acknowledgements:} I am grateful to Peter Hislop and Christoph Marx for helpful correspondence. I would like to thank Abel Klein for helpful comments. I would also like to thank Sasha Sodin for pleasant and useful discussions pertaining to the Ky Fan inequalities.

\end{document}